\newtheorem{theorem}{Theorem}[section]
\newtheorem{lemma}[theorem]{Lemma}
\theoremstyle{definition}
\newtheorem{example}[theorem]{Example}
\theoremstyle{remark}
\numberwithin{equation}{section}
\newtheorem{rem}[theorem]{Remark}
\newcommand{\Fa}{\mathcal{F}}
\newcommand{\pr}{\mathbb P}
\newcommand{\E}{\mathbb E}
\newcommand{\R}{\mathbb R}
\newcommand{\wt}{\widetilde}
\begin{document}

\title[ON MVH UNDER PARTIAL OBSERVATIONS  AND TERMINAL WEALTH CONSTRAINTS]{ON MEAN-VARIANCE HEDGING UNDER PARTIAL OBSERVATIONS  AND TERMINAL WEALTH CONSTRAINTS}

\author{VITALII MAKOGIN}

\address[Vitalii Makogin]{Institute of Stochastics,\\  Ulm University, D-89069, Ulm, Germany\\
vitalii.makogin@uni-ulm.de} 

\author{ALEXANDER MELNIKOV}

\address[Alexander Melnikov]{Department of Mathematical and Statistical Sciences,\\  University of Alberta, 632 Central Academic Building\\
Edmonton, AB T6G 2G1., Canada\\
melnikov@ualberta.ca}

\author{YULIYA MISHURA}

\address[Yuliya Mishura]{Department of Probability Theory, Statistics and Actuarial
Mathematics,\\ Taras Shevchenko National University of Kyiv, Volodymyrska 64,\\
Kyiv, 01601, Ukraine\\ myus@univ.kiev.ua }

\maketitle

\begin{abstract}
In the paper a mean-square minimization problem under terminal wealth constraint with partial observations is studied. The problem is naturally connected to the mean-variance hedging problem under incomplete information. A new approach to solving this problem is proposed. The paper provides a solution when the underlying pricing process is a square-integrable semimartingale. The proposed method for study is based on the martingale representation. In special cases the Clark-Ocone representation can be used to obtain explicit solutions. The results and the method are illustrated and supported by example with two correlated geometric Brownian motions.
\end{abstract}


\section{Introduction}

Let us start with the problem which can arise on commodity markets. Particularly, the following example is given in the paper of \cite{Weisshaupt}. Suppose that a mining company wants to exploit a certain mining area. The mineral $A$ which the mining company would produce after this investment is not traded now, but there exists another mineral $\widetilde{A}$ with similar characteristics  which is already traded on the market. Then the unobservable price process $S$ of the mineral $A$ is highly correlated with the price process $\widetilde{S}$ of the mineral $\widetilde{A}.$ The mining company wants to insure against the risk that the price of the mineral $A$ after the period of investment $T$ will be  below the production costs $c.$ It invests money in options on mineral $\wt{A}.$ In this case, the company wants to find a strategy that minimizes the expected value of the squared difference between the price of $A$ and portfolio value. Such problem is called the problem of mean-variance hedging (MVH).

Since the pioneering work of \cite{folmer}, the  mean-variance hedging
is a permanent area of research in mathematical finance. At the beginning, the problem was formulated
assuming that the probability measure was a martingale measure. In this context, some results were obtained in the case of full information by  \cite{folmer}.
Under incomplete information this type of hedging  was  developed later in many papers. See, for example the paper of
\cite{Schweizer_92}, where two correlated Wiener processes were considered and the hedging strategy could depend on
both, and \cite{Schweizer_94} where results were obtained using projection techniques. 

We introduce the model of the present paper as follows.
Assume that an agent used to work on well-known simple financial market and this market is observable enough to be complete. The agent wants to use financial instruments only from this market, and he wants to start working on a new unobservable market. We assume that this new market is incomplete. In our model there are two contingent claims $\wt{H}$ and $H$. The first one is observable and from the completeness of the observable market the agent superhedges $\wt{H}$,  {i.e., he chooses the strategy under which the terminal wealth exceeds the claim value.} The second claim $H$ is from the new incomplete market and for reducing the new risk the agent uses the mean-variance approach, {i.e., he simultaneously minimizes the mean-square difference of the terminal wealth and the unobservable claim. The simplest example of this model is the mean-variance hedging of unobservable claim $H$ using observable asset with the condition that portfolio value is non-negative at time $T,$ i.e., in this case $\wt{H}=0.$}
{We assume that there are no arbitrage on both markets and therefore the value process of an admissible portfolio with non-negative terminal value is also non-negative. It enables us to consider the terminal constraint rather than the constraint which is pointwise in time.}

To authors' knowledge such an approach has not been considered in the existing literature. {In the paper of \cite{folmer2000} the expectation of shortfall weighted by some loss function is  minimized under condition that wealth process is non-negative on interval $[0,T].$}
{In the paper of \cite{Korn95} investors trade in such a way that they achieve a non-negative wealth over the whole time interval $[0, T].$ The authors' analysis is based on the expected utility or the mean-variance approach and they gave a common framework including both types of selection criteria as special cases by considering portfolio problems with terminal wealth constraints. In the paper of \cite{Korn97} the considered problem is the mean-variance minimization for Black-Sholes model under condition that the terminal wealth only is non-negative. But author's aim  was concentrated on finding the terminal wealth value and the exact form of the minimization strategy was not given.}
{In the paper of \cite{Bielecki} a continuous-time mean-variance portfolio selection problem is studied where all the market coefficients are random and the wealth process under any admissible trading strategy is not allowed to be below zero at any time.}

In the  paper \cite{Fujii}, the mean-variance hedging problem is studied in a partially observable market where the drift processes can only be inferred through the observation of asset or index processes. In the paper of \cite{Ceci} authors 
provide a suitable Galtchouk-Kunita-Watanabe decomposition of the contingent claim that works in a partial information framework. The paper \cite{Hubalek} considers  variance-optimal hedging for processes with stationary independent increments.  And in the paper of \cite{Jeanblanc} the problem of mean-variance hedging
for general semimartingale models is solved via stochastic control methods.

Mathematically, the ``unobservable" information is described by the filtration $\mathbb{F}=\{\Fa_t, t\geq 0\}$ and ``observable" information is described by its subfiltration $\wt{\mathbb{F}}=\{\widetilde{\Fa}_t, t\geq 0\}.$ We restrict ourselves to the finite horizon $T>0.$ The prices of underlying asset are  given by $\wt{\mathbb{F}}$-adapted square-integrable stochastic process $\wt{S}=\{\wt{S}_t,t\in [0,T]\}.$  
We assume that $\wt{S}$ be a semimartingale. 
A trading strategy is described by $\wt{\mathbb{F}}$-predictable square-integrable stochastic processes $\xi=\{\xi_t,t\in [0,T]\}$ such that the stochastic integral $\int_0^T \xi_t d\wt{S}_t$ is well-defined. This integral describes the trading gains induced by the self-financing portfolio strategy associated to $\xi.$ Let a contingent claim $\wt{H}$ be a $\wt{\Fa}_T$-measurable square-integrable nonnegative random variable. At time $T$, a hedger who starts with initial capital $x$ and uses the strategy $\xi,$ has to pay the random amount $\wt{H},$ so that portfolio value should not be less than $\wt{H}.$ This contingent claim $\wt{H}$ can be interpreted as a random lower bound on a terminal wealth. At the same time hedger wants to approximate a random amount $H$ by portfolio value. In contradiction to $\wt{H}$, we assume that $H$ is a $\Fa_T$-measurable square-integrable non-negative random variable. 
In this context, mean-variance hedging  problem with partial observations  means solving the optimization problem 
$$ \mbox{minimize } \E \left(H-x-\int_0^T \xi_t d\wt{S}_t \right)^2 \mbox{ over all } \xi \in \Xi (x,\wt{H}),$$
where
$\Xi (x,\wt{H})=\{\xi: x+\int_0^T \xi_t d\wt{S}_t \ge \wt{H} ~a.s.\}$

This problem is naturally related to the mean-variance hedging. The main challenge in solving mean-variance hedging problem is to find more explicit descriptions of the optimal strategy {and in the present paper we introduce the new approach.}
{
The concrete form of the minimizing strategy can be obtained by the Clark-Ocone formula when it is applicable.}


We consider the example of this problem when pricing processes of observable and unobservable assets $\wt{S}$, $S$ are two correlated geometric Wiener processes. {In this case we can use the delta-hedging but we illustrate how our method works by applying the Clark-Ocone theorem in the Brownian setting.} 
For the case when the contingent claim $H$ is a call-option we provide the precise formula of the solution and make numerical illustrations.


The paper is organized as follows. In section 2 we formulate the conditional mean-variance hedging problem under incomplete information in the general semimartingale setting. In section 3 we reduce it to the simplified statement in order to avoid technical details. We prove the auxiliary result concerning the representation of the random variable that is approached and prove the main result that gives the solution of the minimization problem.  In  section 4 the corresponding results are illustrated with the help of the model with two correlated geometric Wiener processes. In subsection 4.1 we present the numerical illustrations.  

\section{Preliminaries and the Formulation of a Problem}

Let we have complete probability space $(\Omega, \Fa, P)$  with filtration $\mathbb{F}=\{\Fa_t, t\geq 0\}$ that corresponds to the ``complete information". Suppose that there exists a subfiltration  $\mathbb{\widetilde{F}}=\{\widetilde{\Fa}_t, t\geq 0\}$ that corresponds to the ``incomplete information". Consider {c\`adl\`ag} risk asset $\widetilde{S}=\{\widetilde{S_t},t\geq 0\}$ such that  $\widetilde{S}$ is non-negative and adapted to the  ``incomplete information'', or, that is the same for us,  $\mathbb{\widetilde{F}}$-adapted. We suppose that non-risky asset $B_t \equiv 1$ and the market $\Sigma$ is arbitrage-free on $(\Omega, \Fa, P)$ with filtration $\mathbb{F}$. Moreover, we suppose that $\widetilde{S}\in\Sigma$ and the observable market $\widetilde{\Sigma}={\{1,\widetilde{S}\}}$ is complete on  $(\Omega, \Fa, P)$  with filtration $\mathbb{\widetilde{F}}\subset \mathbb{F}$. We restrict ourselves to the finite horizon $T$, so we consider all   processes on the interval $[0,T]$.
Let $\mathcal{P}$ be the set of all equivalent martingale measures for $\Sigma$. Then the restriction $\widetilde{P} $ of any  $P^*\in \mathcal{P}$ on $\mathbb{\widetilde{F}}$ is the same unique equivalent martingale measure for the observable market $\widetilde{\Sigma}$ so that $\widetilde{S}$ is $\mathbb{\widetilde{F}}$-martingale w.r.t. $ \widetilde{P}.$  Denote $D_T=\frac{d \wt{P}_T}{d P_T}$ the restriction of $\frac{d \wt{P}}{d P}$ on interval $[0,T]$. 

 Now we introduce two square-integrable nonnegative random variables, $H$ and $\widetilde{H}$, $H$ being $ {\Fa}_T$-measurable and $\widetilde{H}$ being $\widetilde{\Fa}_T$-measurable. We can characterize them as ``unobservable'' and ``observable'' random variables or contingent claims, correspondingly. 

In order to remain within the framework of the square-integrable approach, we introduce the following assumptions.
\begin{enumerate}
\label{prob_A}
\item[$(A1)$] $\widetilde{S}=\{\widetilde{S_t},t\geq 0\}$ is the semimartingale admitting the representation $\widetilde{S_t}=\widetilde{N_t}+\widetilde{A_t},$ where $\widetilde{N}$ is the square-integrable martingale and $\widetilde{A}$ is the  predictable process of square-integrable variation. Suppose that $\mathbb{\widetilde{F}}$ is generated by $\widetilde{N}=\{\widetilde{N}_t,t\geq 0\}$.
\item[$(A2)$] $\E D_T^2<\infty,$ $\E H^2<\infty $ and $\E \widetilde{H}^2<\infty.$
\end{enumerate}
These conditions mean, in particular, that we can consider stochastic integral w.r.t. the semimartingale $\widetilde{S},$ $$ {I}(t, {\xi})=\int_0^t \xi_s d \wt{S}_s=\int_0^t \xi_s d \wt{N}_s+\int_0^t \xi_s d \wt{A}_s,\;t\in[0,T]$$ for such $\mathbb{\widetilde{F}}$-predictable processes $\xi$ that  $  \int_0^T \xi_s^2 d \langle \widetilde{N}\rangle_s <\infty$ and $ \int_0^T |\xi_s| d |\widetilde{A}|_s <\infty $ a.s. 

Denote $\Xi $ class of such $\mathbb{\widetilde{F}}$-predictable strategies.

Completeness of market $\wt{\Sigma}$ together with condition $(A2)$ means that for any initial value $x\geq \E_{\widetilde{P} }\widetilde{H}$ we can construct the superhedge of the contingent claim $\widetilde{H}$ a.s. with the help of   such    $\xi\in\Xi $    that
 $\E( {I}(T, {\xi}))^2<\infty$. In other words, there exists such $\xi\in\Xi $ that
 $x+{I}(T, {\xi})\geq \widetilde{H}  \quad a.s.$ Further, for any $x \in \R$ denote 
$$\Xi(x,\widetilde{H})=\{\xi\in \Xi : x+\int_0^T \xi_s  d   \widetilde{S}_s\geq \widetilde{H}\;\textit{a.s.}\}.$$
Note, if $x<\E_{\wt{P}} \wt{H},$ then the space of admissible strategies is empty. So, in what follows, we can assume that $x\geq \E_{\wt{P}} \wt{H}.$

Now we can state a conditional minimization problem in the semimartingale framework.

\textbf{Problem} $[S({x,{\widetilde{H}};{H}})]$. Starting with fixed value $x\geq \E_{\widetilde{P} }\widetilde{H},$ to construct the hedging strategy $\widetilde{\xi}\in\Xi(x, \widetilde{H})$ so that
$$\E\left(H-x-\int_0^T \widetilde{\xi} _s d \wt{S}_s \right)^2=\min_{\xi \in \Xi(x, \widetilde{H})}\E\left(H-x-\int_0^T \xi_s d \wt{S}_s\right)^2, $$
and such $\widetilde{\xi}$ for which $$\min_{\xi \in \Xi(x,\widetilde{H})}\E\left({H}-x-\int_0^T \xi_s d \wt{S}_s\right)^2=\E\left({H}-x-\int_0^T \widetilde{\xi}_s d \wt{S}_s\right)^2.$$

Note that if $\widetilde{N}$ has a martingale representation property then the the market $\widetilde{\Sigma}$ is complete. For example, Brownian motion and compensated Poisson process have this property.

\section{Main results}

In order to simplify the solving  of Problem  $[S({x,{\widetilde{H}};{H}})]$ we make the following remarks.
\begin{rem}\label{rem6}{\rm
 We present $\E\left(H-x-\int_0^T \xi_s d \wt{S}_s\right)^2$ as
\begin{align*}
&\E\left(H-\E(H\vert\wt{\Fa}_T)+\E(H\vert\wt{\Fa}_T)-x-\int_0^T \xi_s d \wt{S}_s\right)^2\\
&=\E\left(H-\E(H\vert\wt{\Fa}_T)\right)^2+\E\left(\E(H\vert\wt{\Fa}_T)-x-\int_0^T \xi_s d \wt{S}_s\right)^2,
\end{align*} 
since $H-\E(H\vert\wt{\Fa}_T)$ and any square-integrable $\wt{\Fa}_T$-measurable random variable  are orthogonal. So, Problem  $[S({x,{\widetilde{H}};{H}})]$ is reduced to the finding of
\begin{equation}
\label{pr_G}
\begin{gathered}
\min_{\xi \in \Xi(x,\widetilde{H})}\E\left(\E(H|\widetilde{\Fa}_T)-x-\int_0^T \xi_s d \wt{S}_s\right)^2,
 \end{gathered}
\end{equation}
and such $\widetilde{\xi}$ for which $$\min_{\xi \in \Xi(x,\widetilde{H})}\E\left(\E(H|\widetilde{\Fa}_T)-x-\int_0^T \xi_s d \wt{S}_s\right)^2=\E\left(\E(H|\widetilde{\Fa}_T)-x-\int_0^T \widetilde{\xi}_s d \wt{S}_s\right)^2.$$
}
\end{rem}
\begin{rem}\label{rem7} {\rm 
Denote $H_1=\E(H|\widetilde{\Fa}_T).$ Consider the expansion  $H_1=H_1\mathbbm{1}_{H_1\geq \widetilde{H}}+H_1\mathbbm{1}_{H_1<\widetilde{H}}$
and denote $H_2=H_1\mathbbm{1}_{H_1\geq \widetilde{H}}+\widetilde{H}\mathbbm{1}_{H_1<\widetilde{H}}\geq \widetilde{H }$. 
On one hand, for any $\xi \in \Xi(x,\widetilde{H})$
\begin{align}
\nonumber
&\E\left(H_1-x-\int_0^T \xi_s d \wt{S}_s\right)^2  \\
\nonumber & =\E\left(H_1-x-\int_0^T \xi_s d \wt{S}_s\right)^2\mathbbm{1}_{H_1<\widetilde{H}} 
+\E\left(H_1-x-\int_0^T \xi_s d \wt{S}_s\right)^2\mathbbm{1}_{H_1\geq\widetilde{H}} \\
\label{equ--1} &\geq \E\left(\widetilde{H}-x-\int_0^T \xi_s d \wt{S}_s\right)^2\mathbbm{1}_{H_1<\widetilde{H}} +\E\left(H_1-x-\int_0^T \xi_s d \wt{S}_s\right)^2\mathbbm{1}_{H_1\geq\widetilde{H}}\\
\nonumber &=\E\left(H_2-x-\int_0^T \xi_s d \wt{S}_s\right)^2.
\end{align}

On the other hand,  for any $\xi \in \Xi(x,\widetilde{H})$
\begin{align}
\nonumber&\E\left(H_1-x-\int_0^T \xi_s d \wt{S}_s\right)^2=\E\left(H_1\mathbbm{1}_{H_1\geq \widetilde{H}}+H_1\mathbbm{1}_{H_1<\nonumber\widetilde{H}}-x-\int_0^T \xi_s d \wt{S}_s\right)^2 \\
\nonumber&=\E\left(H_1\mathbbm{1}_{H_1\geq \widetilde{H}}-x-\int_0^T \xi_s d \wt{S}_s\right)^2+\E\left(H_1\mathbbm{1}_{H_1<\widetilde{H}}\right)^2\\
\nonumber& -2\E\left(x+\int_0^T \xi_s d \wt{S}_s\right)H_1\mathbbm{1}_{H_1<\widetilde{H}}\\
\nonumber&\leq \E\left(H_1\mathbbm{1}_{H_1\geq \widetilde{H}}-x-\int_0^T \xi_s d \wt{S}_s\right)^2+\E\left(H_1\mathbbm{1}_{H_1<\widetilde{H}}\right)^2-2\E\left(\widetilde{H}H_1\mathbbm{1}_{H_1<\widetilde{H}}\right)\\
&\label{equ--2} \leq \E\left(H_1\mathbbm{1}_{H_1\geq \widetilde{H}}-x-\int_0^T \xi_s d \wt{S}_s\right)^2-\E\left(H_1\mathbbm{1}_{H_1<\widetilde{H}}\right)^2,
\end{align}
and the equalities in \eqref{equ--1},\eqref{equ--2} are achieved if and only if $H_1\geq\widetilde{H}$ a.s. Therefore, we can  restrict ourselves to the case    $H_1\geq\widetilde{H}$ a.s. and in other cases apply bounds \eqref{equ--1},\eqref{equ--2} (It  will be specified in Remark \ref{rem10} how to deal with  the  upper bound in \eqref{equ--2}.)}
\end{rem}
\begin{rem}{\rm \label{rem8} Now, let $H_1\geq\widetilde{H}$ a.s. and consider the case when $x=\E_{\widetilde{P}} H_1 .$  It follows from the completeness of the market $\wt{\Sigma}$ that we   have  the representation $H_1=x+\int_0^T \xi^0_s d \wt{S}_s$ for some $\xi^0 \in \Xi(x, \wt{H})$. So, we put $\widetilde{\xi}=\xi^0$ and get the trivial zero solution of minimization problem. So, it is reasonable to  consider two cases: $x<\E_{\widetilde{P}} H_1 $ and $x>\E_{\widetilde{P}} H_1 $. However, since our goal is to solve the minimization problem with minimal initial resources, we suppose in what follows that $x<\E_{\widetilde{P}} H_1 $.}
\end{rem}
\begin{rem}\label{rem9} { \rm Further, let $\E_{\widetilde{P}}\widetilde{H}  \leq x<\E_{\widetilde{P}} H_1 $. 
Evidently, $\E\left(H_1-x-\int_0^T \xi_s d \wt{S}_s\right)^2$ $=\E\left(H_1-\wt{H}-\left(x+\int_0^T \xi_s d \wt{S}_s-\wt{H}\right)\right)^2.$ It follows from the completeness of the market  that there exists ${\eta} \in \Xi$ such that $\wt{H}=\E_{\widetilde{P}}\wt{H}+\int_0^T {\eta}_s d \wt{S}_s.$   Now, rewrite 
\begin{equation}
\nonumber
\E\left(H_1-x-\int_0^T \xi_s d \wt{S}_s\right)^2=\E\left(H_1-\widetilde{H}-\left(x-\E_{\widetilde{P}}\wt{H}+\int_0^T (\xi_s-\eta_s) d \wt{S}_s\right)\right)^2,
\end{equation}
denote $G=H_1-\widetilde{H}$, $g=x-\widetilde{x}$, $ {\zeta}_s=\xi_s-\eta_s,$ and note that $G\geq 0$ a.s., {$\E G>0$}.
Then obviously $0\leq g<\E_{\widetilde{P} }G$ and $g+\int_0^T \zeta_s d \wt{S}_s=x-\widetilde{x}+\int_0^T (\xi_s-\eta_s) d \wt{S}_s\geq \widetilde{H}-\widetilde{H}= 0$ a.s.  The case $g=0$  corresponds to $x=\E_{\widetilde{P}}\wt{H}$ and then the trivial solution of $[S({x,{\widetilde{H}};{H}})]$ is the strategy ${\eta}_s.$
 So, further we assume $g>0$ or $x>\E_{\widetilde{P} }\wt{H}.$
It means that we reduce Problem $[S({x,{\widetilde{H}};{H}})]$ to the following one.}
\end{rem}

\textbf{Problem} $[S({g,0;{G}})]$. For fixed square-integrable nonnegative $ {\widetilde{F}}_T$-measurable random variable $G$ and fixed number  $0< g< \E_{\widetilde{P} }G$  to find  $$\min_{\xi \in \Xi(g,0)}\E\left(G-g-\int_0^T \xi_s d \wt{S}_s\right)^2,$$ and such $\widetilde{\xi}\in \Xi(g,0)$ for which $$\min_{\xi \in \Xi(g,0)}\E\left(G-g-\int_0^T \xi_s d \wt{S}_s\right)^2=\E\left(G-g-\int_0^T \widetilde{\xi}_s d \wt{S}_s\right)^2.$$
\begin{rem}\label{rem10} { \rm Consider the term $\E(x, \wt{S}):=\E\left(H_1\mathbbm{1}_{H_1\geq \widetilde{H}}-x-\int_0^T \xi_s d \wt{S}_s\right)^2$ from \eqref{equ--2}. We can present it as
\begin{equation*}
\E(x, \wt{S})= \E\left((H_1-\widetilde{H})\mathbbm{1}_{H_1\geq \widetilde{H}}-\left(x+\int_0^T \xi_s d \wt{S}_s-\widetilde{H}\mathbbm{1}_{H_1\geq \widetilde{H}}\right)\right)^2.
\end{equation*}
It follows from the market completeness that $\widetilde{H}\mathbbm{1}_{H_1\geq \widetilde{H}}$ admits  the representation $$\widetilde{H}\mathbbm{1}_{H_1\geq \widetilde{H}}=\widetilde{x}_1+\int_0^T \widetilde{\gamma}_s d \wt{S}_s,$$
where $\widetilde{x}_1=\E_{\widetilde{P}}\widetilde{H}\mathbbm{1}_{H_1\geq \widetilde{H}}\leq \E_{\widetilde{P}}\widetilde{H}\leq x$ and $x+\int_0^T \xi_s d \wt{S}_s-\widetilde{H}\mathbbm{1}_{H_1\geq \widetilde{H}} \geq 0$. Therefore, the minimization of the term $\E(x, \wt{S})$ is in the framework of the Problem $[S({g,0;{G}})]$ with $g=x-\widetilde{x}_1\geq 0$ and $G=(H_1-\widetilde{H})\mathbbm{1}_{H_1\geq \widetilde{H}}\geq 0$. So, in the general case, when the inequality $H_1\geq\widetilde{H}$ does not hold a.s.,
we can minimize right-hand side of \eqref{equ--2}  in the framework of the Problem $[S({g,0;{G}})]$ and find the minimization strategy $\wt{\xi}_t.$ The minimum value  $\min_{\xi \in \Xi(x,\wt{H})} \E\left(\E(H|\widetilde{\Fa}_T)-x-\int_0^T \xi_s d \wt{S}_s\right)^2$ will be between the right-hand side of \eqref{equ--1} evaluated for strategy $\wt{\xi}_t+\wt{\gamma}_t$
 and  the minimum value of the right-hand side of \eqref{equ--2}.

}
\end{rem}
To solve Problem $[S({g,0;{G}})]$, recall   $D_T=\frac{d \wt{P}_T}{d P_T}$ is the restriction of $\frac{d \wt{P}}{d P}$ on $[0,T]$  and note that {for any $v(x)\in \R$} $$\E_{\widetilde{P}} \left(G+v(x)D_T\right)^{+} \leq  \E D_TG+ |v (x)|\E D^2_T<+\infty.$$ Now, for any $0< x< \E_{\widetilde{P} }G$ consider $v(x)$ that is the solution of equation
 \begin{equation}\label{MMM-1-2}
\E_{\widetilde{P}}\left(G+v(x)D_T\right)^+=x.
\end{equation}
We see that $v(x)$ is non decreasing.
\begin{lemma}
\label{v_eq}
Function $v=v(x),\; x\in(0,\E_{\widetilde{P} }G)$ is uniquely determined, continuous and {strictly} increasing on the interval $(0, \E_{\widetilde{P}}  G )$. The range of values of this function is the interval $(r, 0)$, where $r=-{\mathrm{ess}\sup}_{\omega\in \Omega}G/D_T.$
\end{lemma}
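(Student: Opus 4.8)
The plan is to realize $v(x)$ as the inverse of the map $\phi(v):=\E_{\widetilde{P}}\big(G+vD_T\big)^{+}$ and to show that $\phi$, restricted to a suitable interval, is a continuous, strictly increasing bijection onto $(0,\E_{\widetilde{P}}G)$. First I would check that $\phi$ is finite on all of $\R$: by $(A2)$ and the Cauchy--Schwarz inequality, $\E_{\widetilde{P}}G=\E D_TG\le(\E D_T^{2})^{1/2}(\E G^{2})^{1/2}<\infty$ and $\E_{\widetilde{P}}D_T=\E D_T^{2}<\infty$, so from the elementary bound $(G+vD_T)^{+}\le G+\abs{v}D_T$ one gets $\phi(v)<\infty$ for every $v$. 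Since $t\mapsto t^{+}$ is convex, for each fixed $\omega$ the function $v\mapsto(G(\omega)+vD_T(\omega))^{+}$ is convex, hence $\phi$ is a finite convex function on $\R$ and therefore continuous.

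Next I would treat monotonicity. Because $P$ and $\widetilde{P}$ are equivalent, $D_T>0$ a.s., so $v\mapsto(G+vD_T)^{+}$ is pointwise nondecreasing and $\phi$ is nondecreasing. To get strict monotonicity on the set where $\phi>0$, suppose $v_1<v_2$ and $\phi(v_1)=\phi(v_2)$; then $(G+v_1D_T)^{+}=(G+v_2D_T)^{+}$ a.s., and on the event $\{G+v_2D_T>0\}$ this would force $(v_2-v_1)D_T=0$, which is impossible unless that event is $P$-null; hence $\phi(v_2)=0$. Now observe that $\phi(v)>0$ is equivalent to $P(G+vD_T>0)>0$, i.e.\ to $P(G/D_T>-v)>0$, i.e.\ to $-v<\mathrm{ess}\sup_{\omega}G/D_T$; that is, $\phi(v)>0$ iff $v>r$ with $r=-\mathrm{ess}\sup_{\omega}G/D_T$. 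Here $r<0$, since $\E_{\widetilde{P}}G>0$ rules out $G=0$ a.s. Thus $\phi\equiv0$ on $(-\infty,r]$ and $\phi$ is strictly increasing on $(r,\infty)$.

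It remains to determine the boundary values of $\phi$ on $(r,0)$. At the right endpoint, $\phi(0)=\E_{\widetilde{P}}G^{+}=\E_{\widetilde{P}}G$. At the left endpoint: if $r$ is finite then $G+rD_T\le0$ a.s.\ by the definition of $r$, so $\phi(r)=0$, and continuity of $\phi$ gives $\phi(v)\downarrow0$ as $v\downarrow r$; if $r=-\infty$ then $(G+vD_T)^{+}\downarrow0$ a.s.\ as $v\to-\infty$, and dominated convergence (with dominating function $G$) gives $\phi(v)\to0$. Hence $\phi$ maps $(r,0)$ continuously, strictly increasingly, and, by the intermediate value theorem, onto $(0,\E_{\widetilde{P}}G)$. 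Consequently $v(x):=\phi^{-1}(x)$ is the unique solution of \eqref{MMM-1-2} for $x\in(0,\E_{\widetilde{P}}G)$, it is continuous and strictly increasing, and its range is exactly $(r,0)$, which is the assertion.

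The main obstacle I anticipate is not a single hard estimate but the careful handling of the endpoint $r=-\mathrm{ess}\sup_{\omega}G/D_T$, which may equal $-\infty$: one must justify the left boundary limit by the appropriate convergence theorem in each case, and one must verify, using the a.s.\ strict positivity of $D_T$, that $\phi$ has no flat stretch inside $(r,0)$, so that the range of $v(x)$ is the full open interval $(r,0)$ rather than a proper subinterval.
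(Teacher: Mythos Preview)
Your argument is correct and follows essentially the same route as the paper: both study the map $\phi(v)=\E_{\widetilde{P}}(G+vD_T)^{+}$, establish its continuity and strict monotonicity on the set where it is positive, identify the boundary behavior, and then read off the properties of $v=\phi^{-1}$. Your organization is somewhat tighter than the paper's---you use convexity of $t\mapsto t^{+}$ to get continuity for free and deduce all properties of $v$ at once from the bijection, whereas the paper checks continuity and strict monotonicity of $v$ separately via sequence arguments---but the underlying idea is the same.
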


\begin{proof} Evidently,   function $f(y)=\E_{\widetilde{P}}\left(yD_T+G\right)^+$ is non-decreasing and continuous on  $\mathbb{R}$, $\lim_{y\rightarrow-\infty}f(y)=0$ and $\lim_{y\rightarrow+\infty}f(y)=+\infty$. Therefore, for any $x\in(0,\E_{\widetilde{P}} G)$ solution of equation \eqref{MMM-1-2} exists. Furthermore, let for $y_1<y_2$
\begin{equation}\label{MMM-1-1}
\E_{\widetilde{P}}\left(y_1 D_T+G\right)^+=\E_{\widetilde{P}}\left(y_2 D_T+G\right)^+=x>0.
\end{equation}
Then $\widetilde{P}\{G>-y_1 D_T\}>0$ therefore with positive probability $\left(y_2 D_T+G\right)^+>$ $\left(y_1 D_T+G\right)^+$ that is in contradiction with \eqref{MMM-1-1}. Therefore, solution of equation \eqref{MMM-1-2}
 is unique and function $v=v(x), v\in(0,\E_{\widetilde{P}} G)$ is uniquely determined. 
 Now, $\E_{\widetilde{P}}\left(v(x)D_T+G\right)^+$ {strictly} increases in $x\in(0,\E_{\widetilde{P}} G)$. Indeed, let $\E_{\widetilde{P}}G >x_2>x_1>0$ and $$\E_{\widetilde{P}}\left(v(x_2) D_T+G\right)^+=x_2>\E_{\widetilde{P}}\left(v(x_1) D_T+G\right)^+=x_1>0.$$ It means in particular that $$\mathbbm{1}\left\{v(x_2) D_T+G>0\right\}- \mathbbm{1}\left\{v(x_1) D_T+G>0\right\}>0$$ with positive probability. Therefore
  \begin{equation*}
  \left(v(x_2) D_T+G\right)\mathbbm{1}\left\{v(x_2) D_T+G>0\right\}>\left(v(x_1) D_T+G\right)\mathbbm{1}\left\{v(x_1) D_T+G>0\right\}
  \end{equation*}
 with positive probability whence $v(x_2)>v(x_1).$

Establish continuity of $v(x)$ on the interval $(0, \E_{\widetilde{P}}G)$. Let  $x_n\uparrow x>0$ then $$\E_{\widetilde{P}}\left(v(x_n)D_T+G\right)^+ =x_n\uparrow\E_{\widetilde{P}}\left(v(x)D_T+G\right)^+=x.$$
It means that $\E_{\widetilde{P}}\left(v(x_n)D_T+G\right)^+>\frac{x}{2}>0$ for $n>n_0.$ So, $v(x_n)\in(v(\frac{x}{2}),v(x))$ is increasing and bounded and if   $v(x_{n })\uparrow v \neq v(x)$ then
$$\E_{\widetilde{P}}\left(v(x_{n_k})D_T+G\right)^+\uparrow \E_{\widetilde{P}}\left(v D_T+G\right)^+\neq \E_{\widetilde{P}}\left(v(x)D_T+G\right)^+ $$ that is a contradiction. Continuity from the right is proved the same way,  so $v(x)$ is continuous in $x\in(0, \E_{\widetilde{P}} G)$. Concerning the range of the values of $v(x)$, let   $x_k \downarrow 0$.

 Then on one hand $\E_{\widetilde{P}}\left(v(x_k)D_T+G\right)^+ \downarrow 0$  which implies that  $ \left(v(x_k)D_T+G\right)^+ \downarrow 0 $  ${\widetilde{P}}$-a.s. and  $\pr\left\{G>-v(x_k)D_T\right\}\downarrow 0$, or, that is the same, $\pr\left\{v(x_k)D_T\leq -G\right\}\uparrow 1$. On the other hand $v(x_k)\downarrow v\in \mathbb{R}\cup\{-\infty\}$. Therefore $\pr\left\{v \leq \mathrm{ess inf}_{\omega\in \Omega}(-G/D_T)\right\}=1$, and it follows from the continuity of $v(x)$ that   $v=r,$ where $r=-{{\mathrm{ess}\sup}}_{\omega\in \Omega}G.$ Evidently, $\E_{\widetilde{P}}\left(0+G\right)^+=\E_{\widetilde{P}} G$, so,  $v(x)\uparrow 0$ as $x\uparrow \E_{\widetilde{P}} G$ and lemma is proved.
  \end{proof}

Note  that
$$\E\left(\left(G+v(x)D_T\right)^{+}\right)^2\leq 2\E G^2+2 v^2(x)\E D^2_T<+\infty.$$
  Then it follows from the completeness of the market that $\left(G+v(x)D_T\right)^{+}$ admits the representation
\begin{equation}\label{repre1}
\left(G+v(x)D_T\right)^{+}=\E_{\widetilde{P}} \left(G+v(x)D_T\right)^{+} +\int_0^T\widetilde{\xi}_s d\widetilde{S}_s=x+\int_0^T\widetilde{\xi}_s d\widetilde{S}_s
\end{equation}
with some $\widetilde{\xi}\in \Xi$.
\begin{theorem}
  \label{thm3} Let $g\in(0,\E_{\widetilde{P} }G)$ be fixed. Consider $v(g)$ that is the unique solution of equation \eqref{MMM-1-2} with $x=g$ and the  representation \eqref{repre1} of the random variable $\left(G+v(g)D_T\right)^{+}$:
  \begin{equation}\begin{gathered}\label{MMM-3-1}\left(G+v(g)D_T\right)^{+}=g+\int_0^T \widetilde{\xi}_s d\wt{S}_s.\end{gathered}\end{equation}
  Then $\widetilde{\xi}$ is the solution of the minimization problem $[S({g,0;{G}})]$.
  \end{theorem}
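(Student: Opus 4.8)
The plan is to convert Problem $[S(g,0;G)]$ into a static constrained minimization over terminal wealths and then verify directly that $Y^{*}:=\left(G+v(g)D_{T}\right)^{+}$ is optimal. First I would parametrize the feasible set: to each $\xi\in\Xi(g,0)$ associate its terminal wealth $Y^{\xi}:=g+\int_{0}^{T}\xi_{s}\,d\wt{S}_{s}$, an $\wt{\Fa}_{T}$-measurable nonnegative random variable. By the no-arbitrage hypothesis on $\wt{\Sigma}$ (see the Introduction) the whole wealth path stays nonnegative, so $Y^{\xi}$ arises from a nonnegative $\wt{P}$-local martingale, hence a $\wt{P}$-supermartingale, and therefore $\E[D_{T}Y^{\xi}]=\E_{\wt{P}}Y^{\xi}\le g$. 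One may also assume $\E(Y^{\xi})^{2}<\infty$, since otherwise $\E(G-Y^{\xi})^{2}=\infty$ and $\xi$ cannot compete with $\wt\xi$. Conversely, the second-moment bound stated just before \eqref{repre1}, the completeness assertion of Section~2, equation \eqref{MMM-1-2} and the representation \eqref{repre1}--\eqref{MMM-3-1} show that $\wt{\xi}\in\Xi(g,0)$, that $Y^{\wt{\xi}}=Y^{*}$, and that $\E[D_{T}Y^{*}]=\E_{\wt{P}}\left(G+v(g)D_{T}\right)^{+}=g$. Thus it suffices to prove that $Y^{*}$ minimizes $\E(G-Y)^{2}$ among all $\wt{\Fa}_{T}$-measurable $Y\ge0$ with $\E[D_{T}Y]\le g$.

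The candidate $Y^{*}$ is exactly the Lagrangian minimizer: attaching a multiplier $\lambda=-2v\ge0$ to the budget constraint, the pointwise minimizer of $y\mapsto(G-y)^{2}-2vD_{T}y$ over $y\ge0$ is $\left(G+vD_{T}\right)^{+}$, and Lemma~\ref{v_eq} picks out $v=v(g)<0$ as the unique value for which the constraint is active. To make this rigorous I would argue by verification rather than by invoking a duality theorem. Starting from the pointwise identity
\[(G-Y)^{2}-(G-Y^{*})^{2}=2(Y^{*}-Y)(G-Y^{*})+(Y^{*}-Y)^{2},\]
I would write $G-Y^{*}=-v(g)D_{T}-R$, where $R:=-v(g)D_{T}-(G-Y^{*})\ge0$ vanishes on $\{G+v(g)D_{T}>0\}$ and equals $-(G+v(g)D_{T})$ on $\{G+v(g)D_{T}\le0\}=\{Y^{*}=0\}$. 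Then
\[\E[(Y^{*}-Y)(G-Y^{*})]=-v(g)\bigl(g-\E[D_{T}Y]\bigr)+\E[YR],\]
and both summands are nonnegative: the first because $-v(g)>0$ by Lemma~\ref{v_eq} and $\E[D_{T}Y]\le g$, the second because $R$ is supported on $\{Y^{*}=0\}$, where $Y^{*}-Y=-Y\le0$. Taking expectations in the identity yields $\E(G-Y)^{2}-\E(G-Y^{*})^{2}\ge0$, which is the desired optimality; all the expectations involved are finite by the Cauchy--Schwarz inequality together with $\E D_{T}^{2}<\infty$ and the quadratic integrability of $G$, $Y$, $Y^{*}$.

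I expect the main difficulty to be the measure-change bookkeeping rather than the variational computation. One has to be careful that the replicating strategy $\wt{\xi}$ produced by completeness in \eqref{repre1} is genuinely admissible, i.e. $\wt{\xi}\in\Xi(g,0)$ with $\E(I(T,\wt{\xi}))^{2}<\infty$, and, more delicately, that an \emph{arbitrary} competing wealth $Y^{\xi}$ indeed satisfies $\E_{\wt{P}}Y^{\xi}\le g$. The latter is where the nonnegativity of the entire wealth trajectory (granted here by the no-arbitrage assumption, which is precisely why the problem may be posed with a terminal rather than a pointwise-in-time constraint) is essential, since a $\wt{P}$-local martingale need not have constant expectation unless one knows it is bounded below. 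Once these two points are secured, the remaining steps are routine.
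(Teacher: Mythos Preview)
Your argument is correct and follows essentially the same verification route as the paper: both proofs rest on the decomposition $G-Y^{*}=-v(g)D_{T}-(G+v(g)D_{T})^{-}$, dispose of the $D_{T}$-cross-term via the measure change, and show the remaining cross term is nonnegative from the support property of $(G+v(g)D_{T})^{-}$ together with $Y\ge 0$. The paper expands $\E\bigl(G-g-\int_{0}^{T}\eta_{s}\,d\wt{S}_{s}\bigr)^{2}$ directly in terms of strategies and arrives at the same two nonnegative pieces plus the extra square $\E\bigl(\int_{0}^{T}(\wt{\xi}_{s}-\eta_{s})\,d\wt{S}_{s}\bigr)^{2}$; your static reformulation and Lagrangian interpretation simply repackage this computation.

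The one genuine difference is in how the budget is handled. The paper writes $\E_{\wt{P}}\int_{0}^{T}(\wt{\xi}_{s}-\eta_{s})\,d\wt{S}_{s}=0$, tacitly treating every admissible gains process as a true $\wt{P}$-martingale. You instead use the supermartingale property of a nonnegative $\wt{P}$-local martingale to obtain only $\E_{\wt{P}}Y^{\eta}\le g$, and then exploit $-v(g)>0$ (from Lemma~\ref{v_eq}) so that the slack $-v(g)\bigl(g-\E[D_{T}Y]\bigr)$ is still nonnegative. This is a mild but real refinement on precisely the point you flag as delicate; the paper's version is terser but relies on the stronger (and, in its generality, not explicitly verified) martingale property of competing strategies.
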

  \begin{proof}
On one hand,  $\widetilde{\xi}\in   \Xi(g,0)$. On the other hand, let $\eta\in   \Xi(g,0)$. Then
\begin{align*}
&\E\left(G-g-\int_0^T \eta_s d\wt{S}_s\right)^2\\
&=\E\left(-v(g)D_T+\left(G+v(g)D_T\right)^+-\left(G+v(g)D_T\right)^{-}-g-\int_0^T\eta_s d \wt{S}_s\right)^2\\
&=\E\left(-v(g)D_T-\left(G+v(g)D_T\right)^{-}+\int_0^T(\widetilde{\xi}_s-\eta_s )d \wt{S}_s\right)^2\\
&=\E\left( -v(g)D_T-\left(G+v(g)D_T\right)^{-} \right)^2+\E\left(\int_0^T (\widetilde{\xi}_s-\eta_s ) d\wt{S}_s\right)^2\\
&+2\E\left(-v(g)D_T-\left(G+v(g)D_T\right)^{-} \right)\left(\int_0^T  (\widetilde{\xi}_s-\eta_s ) d\wt{S}_s\right).
\end{align*}
 Note that 
 \begin{align*}
 & -v(g)D_T-\left(G+v(g)D_T\right)^{-}=G - g-\int_0^T \widetilde{\xi}_s d\wt{S}_s;\\
 &\int_0^T  (\widetilde{\xi}_s-\eta_s ) d\wt{S}_s=\left(G+v(g)D_T\right)^{+}-g-\int_0^T\eta_s d\wt{S}_s;\\
 &-\E\left(G+v(g)D_T\right)^{-}\left( \left(G+v(g)D_T\right)^{+}-g-\int_0^T\eta_s d\wt{S}_s\right)\\
 &= \E\left(G+v(g)D_T\right)^{-}\left(g+\int_0^T\eta_s d\wt{S}_s\right)\geq 0;
\end{align*}
$$\E \left(D_T \int_0^T (\widetilde{\xi}_s-\eta_s ) d\wt{S}_s\right)=
 \E_{\widetilde{P}} \int_0^T (\widetilde{\xi}_s-\eta_s ) d\wt{S}_s=0.$$
 Taking this into account, we continue:
\begin{align*}
&\E\left( -v(g)D_T-\left(G+v(g)D_T\right)^{-} \right)^2+\E\left(\int_0^T (\widetilde{\xi}_s-\eta_s ) d\wt{S}_s\right)^2\\
&+2\E\left(-v(g)D_T-\left(G+v(g)D_T\right)^{-} \right)\left(\int_0^T  (\widetilde{\xi}_s-\eta_s ) d\wt{S}_s\right)\\
&=\E\left( G - g-\int_0^T \widetilde{\xi}_s d\wt{S}_s\right)^2+\E\left(\int_0^T (\widetilde{\xi}_s-\eta_s ) d\wt{S}_s\right)^2\\
&-2\E\left(G+v(g)D_T\right)^{-}\left( \left(G+v(g)D_T\right)^{+}-g-\int_0^T\eta_s d\wt{S}_s\right)\\
&\geq \E\left( G -g-\int_0^T \widetilde{\xi}_s d\wt{S}_s\right)^2 +\E\left(\int_0^T (\widetilde{\xi}_s-\eta_s )d\wt{S}_s\right)^2,
\end{align*}
 and the proof follows.
 \end{proof}
 \begin{rem} {\rm \begin{enumerate}
 \item In the case $\widetilde{P}=P,$  the process $\wt{S}$ is a $P$-martingale and $D_T \equiv 1.$
 \item In the case $\widetilde{P}=P,$  the method of solving the mean-variance hedging problem can be described as follows.  We are focused on the case when {$0<x<\E\ H$} and $\E (H| \wt{\Fa}_T) \ge \wt{H}.$  The initial capital $x$ should be divided into two parts $\wt{x}=\E \wt{H}\ge 0$ and $x-\wt{x}\geq 0.$ A hedger with initial capital $\wt{x}$ constructs the strategy $\wt{\eta}$ which replicates the claim $\wt{H}$ using  information $\wt{\mathbb{F}}.$ The process $\wt{\eta}$ and the new initial capital are obtained from {martingale} representation of $\wt{H}.$ Then the hedger should construct the strategy $\wt{\xi}$ such that the terminal wealth $x-\wt{x}+\int_{0}^{T}\wt{\xi}_t d \wt{S}_t$ is non-negative a.s. and $\wt{\xi}$ minimize the mean-square difference of a new claim $\wt{G}:=\E (H|\wt{\Fa}_T)$ and the terminal wealth. The strategy $\wt{\xi}$ is obtained from martingale representation of random variable $(\wt{G}+v(x-\wt{x}))^+,$ where $v(x-\wt{x})$ is such that  $\E (G+v(x-\wt{x}))^+=x-\wt{x}.$ Thus, the resulting strategy is $\wt{\eta}+\wt{\xi}.$ 
 \end{enumerate} }
 \end{rem}
 
\section{ Application to the model with two correlated geometric Brownian motions}
\begin{example}{\rm
\label{ex_S_H}
Let $\wt{W},\widehat{W}$ be two independent Wiener processes under measure $P$ and $W=\rho \wt{W} + \sqrt{1-\rho^2}\widehat{W}.$ Let filtration $\mathbb{F}$ be generated by two-dimensional Wiener process $(\wt{W},\widehat{W})$, filtration $\wt{\mathbb{F}}$ be generated by $\wt{W}.$ Consider the case when an unobservable risk asset is the semimartingale $S=\{S_t=\exp\left(W_t+\int_{0}^{t}a(s)ds\right),t\ge 0 \}$ and an observable risk asset is the semimartingale $\wt{S}=\{\wt{S}_t=\exp(\wt{W}_t+a t), t\ge 0\}$ and the market $\Sigma=\{1,S,\wt{S}\}.$ Let $\{a(s),s\ge 0\}$ be a deterministic function from $L_1[0,T],$ $a$ be some positive constant. Let the contingent claim $G=H(S_T)$ be a square-integrable random variable, where $H:\R\to \R_+$ is real-valued non-decreasing measurable function of polynomial growth at infinity. For initial capital $g\in (0,\E_{\wt{P}} G)$ we find the solution of minimization problem $[S({g,0;{G}})].$

Let $\wt{B}_t:=\wt{W}_t+(a+1/2)t, t\geq 0.$ Then $\wt{S}_t=\exp(\wt{W}_t+a t) = \exp(\wt{B}_t-t/2),t\geq 0.$ If  $\wt{B}$ is a $\wt{P}$-Wiener process, then $\wt{S}$ is a $\wt{P}$-martingale. Denote $a_1:=a+1/2.$ It follows from Girsanov's theorem that {$(\wt{B_t},\widehat{W}_t),t\in [0,T]$} is two-dimensional Wiener process under measure $\wt{P}$ with Radon-Nikodym derivative
\begin{align*}
D_T=\frac{d \wt{P}}{d P} \bigg\vert_{[0,T]} &= \exp\left\{-\int_{0}^T a_1 d \wt{B}_s-\frac12 \int_0^T a_1^2 ds\right\}\\
&= \exp\left\{-a_1\wt{B}_T-\frac{a_1^2 T}{2}\right\} = \exp\left\{-a_1\ln \wt{S}_T+\frac{T}{2}(a^2-\frac{1}{4})\right\}.
\end{align*}
In order to give the explicit solution of minimization problem $[S({g,0;{G}})]$ we make the following steps.

At first, we find $\wt{G}=\E_{\wt{P}}(G|\wt{\Fa}_T).$ We have that
\begin{align*}
 G=H\left(e^{W_T+\int_{0}^{T}a(s)ds}\right)&=H\left(e^{\rho \wt{W}_T+\sqrt{1-\rho^2}\widehat{W}_T+\int_{0}^{T}a(s)ds}\right)\\
 &=H\left(e^{\rho \wt{B}_T+\sqrt{1-\rho^2}\widehat{W}_T+\int_{0}^{T}a(s)ds-a_1 \rho T}\right)\\
 &=H\left((\wt{S}_T)^\rho e^{\sqrt{1-\rho^2}\widehat{W}_T+\int_{0}^{T}a(s)ds-a \rho T}\right).
 \end{align*}
Denote $A_T=\frac{1}{\rho}\int_{0}^{T}a(s)ds-a_1 T.$ Then we get
\begin{align}
\nonumber \wt{G}&=\E(G|\wt{\Fa}_T)\\
\label{G_s_2} &=\int_{\R} H\left(e^{\rho \wt{W}_T+y\sqrt{1-\rho^2}+\rho A_T+\rho a_1 T}\right)\frac{\exp\left(-\frac{y^2}{2 T}\right)}{\sqrt{2\pi T}}dy \\
 &=\int_{\R} H\left(e^u\right)\frac{\exp\left(-\frac{(u-\rho (\wt{W}_T+a_1 T + A_T))^2}{2 T(1-\rho^2)}\right)}{\sqrt{2\pi T(1-\rho^2)}}du.
 \end{align}
Introduce the function
\begin{equation}
\label{df}
f(x)=\int_{\R}H(e^z)\frac{\exp\left(-\frac{(z-\rho x-\rho A_T)^2}{2T(1-\rho^2)}\right)}{\sqrt{2 \pi T (1-\rho^2)}}dz,
\end{equation}
and get that $\widetilde{G}=f\left(\wt{W}_T+a_1 T\right)= f\left(\wt{B}_T\right)=f\left(\ln \wt{S}_T+ T/2\right).$

On the second step, we need to find a solution of equation (\ref{MMM-1-2}). In this case the left hand side of (\ref{MMM-1-2}) equals
\begin{equation*}
\E_{\wt{P}}\left(f(\wt{B}_T)+v(g)\exp\left(-a_1\wt{B}_T+\frac{a_1^2 T}{2}\right)\right)^+.
\end{equation*} Define an auxiliary function $h:\R\to\R_+:$
\begin{equation}
\label{dF}
h(x)=f(x)\exp\left(a_1 x-\frac{a_1^2 T}{2}\right),x\in \R.
\end{equation}
Then $f(x)+v(g)\exp\left(-a_1 x+\frac{a_1^2 T}{2}\right)\ge 0 $ if and only if $ h(x)\geq -v(g).$
 Note that $f$ is non-decreasing function. Indeed, the fact that $H$ is a non-decreasing  implies that for any $x_1<x_2$
\begin{align*}
f(x_1)&=\int_{\R} H(e^{\rho x_1 +\sqrt{T(1-\rho^2)}y+\rho A_T})\frac{\exp(-y^2/2)}{\sqrt{2 \pi}}d y\\
&\leq \int_{\R} H(e^{\rho x_2 +\sqrt{T(1-\rho^2)}y+\rho A_T})\frac{\exp(-y^2/2)}{\sqrt{2 \pi}}d y=f(x_2).
\end{align*}
Moreover,  $h$ is non-decreasing too, and for $h$ we can define a generalized inverse function $h^{(-1)}(x):=\inf\{y: h(y)>x\},x\in \R.$ So, equation (\ref{MMM-1-2}) is rewritten in the following form
\begin{align*}
g&=\E_{\wt{P}}\left(f(\wt{B}_T)+v(g)\exp\left(-a_1\wt{B}_T+\frac{a_1^2 T}{2}\right)\right)^+\\
&=\E_{\wt{P}}\left(f(\wt{B}_T)+v(g)\exp\left(-a_1\wt{B}_T+\frac{a_1^2 T}{2}\right)\right)\mathbbm{1}\{\wt{B}_T\geq h^{(-1)}(-v(g))\}\\
&=\int_{h^{(-1)}(-v(g))}^{+\infty} f(x)\frac{\exp\left(-\frac{x^2}{2T}\right)}{\sqrt{2 \pi T}}dx +v(g)e^{a_1^2 T}\Phi\left(-\frac{h^{(-1)}(-v(g))}{\sqrt{T}}-a_1\right),
\end{align*}
where  $\Phi(x)=\int_{-\infty}^{x}\frac{\exp(-y^2/2)}{\sqrt{2 \pi}}dy,x\in \R$ is the standard normal cumulative distribution function.
The existence of solution $v(g)$ follows from Lemma \ref{v_eq}.
Now, we find the value $\E_{\wt{P}}\wt{G}.$
\begin{equation*}
\E_{\wt{P}}\wt{G}=\E_{\wt{P}}G = \E_{\wt{P}} H\left(e^{\rho \wt{B}_T+\sqrt{1-\rho^2} \widehat{W}_T+\rho A_T}\right)= \int_{\R} H\left(e^{y+\rho A_T}\right)\frac{\exp\left(-\frac{y^2}{2T}\right)}{\sqrt{2 \pi T}}dy.
\end{equation*}
On the next step we need to specify the integral representation of $\left(\wt{G}+v(g)D_T\right)^+.$ Define an auxiliary function \begin{align}
\nonumber \wt{h}(x,\beta)&=(f(x)-\beta \exp\{-a_1x + a_1^2 T/2\})\mathbbm{1}\{x\ge h^{(-1)}(\beta)\}\\
\label{hdef}&=\exp\{-a_1x + a_1^2 T/2\}(h(x)-\beta)\mathbbm{1}\{x\ge h^{(-1)}(\beta)\}, x\in \R, \beta >0.
\end{align}

{In this example we deal with It\`o stochastic integrals and hence we use the Clark-Ocone representation  in Brownian setting  for random variables from the space $\mathbb{D}_{1,2}$ (\cite[Theorem 4.1., Definition 3.1.]{Oksendal_2009}). From \eqref{df} we see that $f\in C^1(\R).$  Then by the chain rule (\cite[Theorem 3.5.]{Oksendal_2009}) we have $\wt{G}=f(\wt{B}_T)\in \mathbb{D}_{1,2}$ and}
$$\widetilde{G} =\E_{\wt{P}}\widetilde{G}+\int_0^T \E_{\wt{P}}(D_t \widetilde{G}| \widetilde{\Fa}_t)d \widetilde{S}_t=\E_{\wt{P}}\widetilde{G}+\int_0^T \frac{1}{\wt{S}_t}\E_{\wt{P}}(D_t \widetilde{G}| \widetilde{\Fa}_t)d \widetilde{B}_t,$$ where $D_t F$ is the stochastic derivative of a random variable $F.$ For the properties of stochastic derivatives we refer to \cite{Oksendal_2009}.
In the Brownian setting we have
$$D_t \widetilde{G} =D_t f(\widetilde{B}_T) =\frac{\partial f (\widetilde{B}_T)}{\partial x}\mathbbm{1}_{[0,T]}(t).$$ 

From (\ref{df}) we obtain
\begin{align}
\nonumber
\frac{\partial f (x)}{\partial x}&=\frac{\partial}{\partial x} \int_{\R}H(e^z)\frac{\exp\left(-\frac{(z-\rho x-\rho A_T)^2}{2T(1-\rho^2)}\right)}{\sqrt{2 \pi T (1-\rho^2)}}dz\\
\nonumber &=\int_{\R}H(e^z)\frac{\exp\left(-\frac{(z-\rho x-\rho A_T)^2}{2T(1-\rho^2)}\right)}{\sqrt{2 \pi T (1-\rho^2)}}\frac{(z- \rho x-\rho A_T)\rho }{T (1-\rho^2)}dz\\
\label{deriv_f} &= \rho \int_{\R}H(e^{\rho x + y\sqrt{T(1-\rho^2)}+\rho A_T})\frac{\exp(-y^2/2)}{\sqrt{2 \pi T(1-\rho^2)}} y  dy.
\end{align}

Then $\left(\wt{G}+v(g)D_T\right)^+=\wt{h}(\wt{B_T},-v(g))$ and $\E_{\wt{P}}\wt{h}(\wt{B_T},-v(g))=g.$  
From Theorem \ref{thm3}, the solution $\wt{\xi}$ of minimization problem $[S({g,0;{G}})]$ is the process from the integral representation of the random variable $\wt{h}(\wt{B}_T,-v(g))=g+\int_{0}^T\wt{\xi}_t d\wt{S}_t=g+\int_{0}^T \wt{\xi}_t (\wt{S}_t)^{-1} d\wt{B}_t.$ We use the Clark-Ocone formula
\begin{equation}
\label{repr_fg_2}
\wt{h}(\wt{B}_T,-v(g))=\E_{\wt{P}} \wt{f}(\wt{B}_T,-v(g)) + \int_{0}^{T} \E [D_t\wt{h}(\wt{B}_T,-v(g))|\wt{\Fa}_t]d \wt{B}_t,
\end{equation}
We see from \eqref{hdef} that the function $\wt{h}$ is not differentiable at $x = h^{(-1)}(-v(g))$, so we cannot use the chain rule directly to
evaluate $D_t\wt{h}(\wt{B}_T,-v(g)).$ However, using the same arguments as in \cite[Example 5.15]{Oksendal_96}, we can approximate $\wt{h}$ by $C^1$ functions $\wt{h}_n$ with the property that $\wt{h}_n(x,-v(g))=\wt{h}(x,-v(g))$ for $|x-h^{(-1)}(-v(g))|\geq\frac{1}{n}.$  
Random variables $\wt{h}_n(\wt{B}_T,-v(g))$ are Wiener polynomials and for all $n>1$ $D_t\wt{h}_n(\wt{B}_T,-v(g))$ exists (see \cite [Lemma A.12]{Oksendal_2009}). 
The space $\mathbb{D}_{1,2}$ consists of all $F\in L^2(P)$ such that there exists Wiener polynomials $F_n$ with the property that $F_n\to F$ in $L^2(P), n\to \infty$ and $\{D_t F_n\}_{n=1}^{\infty}$ is convergent in $L^2(P \times \lambda).$
 Since $\{D_t \wt{h}_n(\wt{B}_T,-v(g))\}_{n=1}^{\infty}$ is convergent and $\wt{h}_n(\wt{B}_T,-v(g))\to \wt{h}(\wt{B}_T,-v(g))$ in $L^2(P),$ we have $\wt{h}(x,-v(g))\in\mathbb{D}_{1,2}.$ Therefore, $D_t \wt{h}(\widetilde{B}_T,-v(g))$ exists. By closability 
of operator $D_t$, we have (see \cite[Theorem A.14]{Oksendal_2009})
\begin{align*}
    D_t\wt{h}(\wt{B}_T,-v(g))&=\lim_{n\to \infty}D_t\wt{h}_n(\wt{B}_T,-v(g))\\
    &=\mathbbm{1}\{\widetilde{B}_T\geq h^{-1}(-v(g))\}D_t \left(f(\widetilde{B}_T)+v(g)e^{-a_1 \wt{B}_T+a_1^2 T/2}\right).
\end{align*}


We know that if $\int_0^T \delta_s d \wt{B}_s=0$ a.s. then $\delta_s=0$ a.s.
Thus, the process $\wt{\xi}$ in the representation of $\wt{h}(\wt{B_T},-v(g))$ is uniquely determined. 


Therefore, the solution $\wt{\xi}_t$ of minimization problem $[S({g,0;{G}})]$ equals
\begin{align*}
\wt{\xi}_t&=\wt{S}_t \E_{\wt{P}}\left[D_t \wt{h}(\wt{B}_T,-v(g)) \big \vert \wt{\Fa}_t\right]\\
&=\wt{S}_t \E_{\wt{P}}\left[\mathbbm{1}\{\wt{B}_T\ge h^{(-1)}(-v(g))\} D_t \left(f(\wt{B}_T)+v(g) \exp\left(-a_1 \wt{B}_T + \frac{a_1^2 T}{2}\right)\right) \bigg \vert \wt{\Fa}_t\right]\\
&=\wt{S}_t \E_{\wt{P}}\left[\mathbbm{1}\{\wt{B}_T\ge h^{(-1)}(-v(g))\} \left(\frac{\partial f(\wt{B}_T)}{\partial x}- v(g) a_1 \exp\left(-a_1 \wt{B}_T + \frac{a_1^2 T}{2}\right)\right) \bigg \vert \wt{\Fa}_t\right]\\
&=\wt{S}_t \E_{\wt{P}}\left[\mathbbm{1}\{\wt{B}_T-\wt{B}_t\ge h^{(-1)}(-v(g))-\wt{B}_t\} \left(\frac{\partial f(\wt{B}_T-\wt{B}_t+\wt{B}_t)}{\partial x}\right) \bigg \vert \wt{\Fa}_t\right]\\
&-v(g)a_1 \wt{S}_t  \exp\left(-a_1\wt{B}_t+\frac{a_1^2 T}{2}\right)\\
&\times \E_{\wt{P}}\left[\mathbbm{1}\{\wt{B}_T-\wt{B}_t\ge h^{(-1)}(-v(g))-\wt{B}_t\} \exp\left(-a_1 (\wt{B}_T-\wt{B}_t) \right) \bigg \vert \wt{\Fa}_t\right].
\end{align*}
Using (\ref{deriv_f}), we  obtain
\begin{align}
\nonumber \wt{\xi}_t&=\wt{S}_t \int_{h^{(-1)}(-v(g))-\wt{B}_t}^{+\infty}\frac{\partial f(x+\wt{B}_t)}{\partial x}\frac{\exp\left(-\frac{x^2}{2(T-t)}\right)}{\sqrt{2\pi (T-t)}}dx\\
\nonumber &-v(g)a_1 \wt{S}_t \exp\left( -a_1 \wt{B}_t+\frac{a_1^2 T}{2}\right)\int_{h^{(-1)}(-v(g))-\wt{B}_t}^{+\infty} \exp\left(-a_1 x \right)\frac{\exp\left(-\frac{x^2}{2(T-t)}\right)}{\sqrt{2\pi (T-t)}}dx\\
\nonumber &=\rho \wt{S}_t \int_{h^{(-1)}(-v(g))-\wt{B}_t}^{+\infty}\int_{\R}H\left(\exp(\rho x + y \sqrt{T(1-\rho^2)}+\rho \wt{B}_t+\rho A_T)\right)\\
\nonumber&\times\frac{ y \exp(- y^2 /2 )}{\sqrt{2\pi T (1-\rho^2)}} \frac{\exp\left(-\frac{x^2}{2(T-t)}\right)}{\sqrt{2\pi (T-t)}}dydx\\
\label{xi_S} &-v(g)a_1 \wt{S}_t \exp\left( -a_1\wt{B}_t+\frac{a_1^2 T}{2}\right)e^{a_1^2 (T-t)/2}\Phi\left(\frac{\wt{B}_t-h^{(-1)}(-v(g))}{\sqrt{T-t}}-a_1\right).
\end{align}
Finally, we give a formula for $\E\left(G-g-\int_0^{T}\wt{\xi}_s d \wt{S}_s\right)^2.$
According to Remark \ref{rem6} the latter value equals $\E(G-\wt{G})^2+\E\left(\wt{G}-g-\int_0^{T}\wt{\xi}_s d \wt{S}_s\right)^2.$ Note that $\E(G-\wt{G})^2=\E G^2- 2 \E (G f(\wt{W}_T+a_1 T))+\E f^2(\wt{W}_T+a_1 T).$  We have
\begin{equation*}
\E G^2=\int_{\R}H^2 \left(e^{u+\rho A_T+\rho a_1 T}\right)\frac{\exp\left(-\frac{u^2}{2T}\right)}{\sqrt{2 \pi T}} du.
\end{equation*}
Then
\begin{align*}
&\E(G-\wt{G})^2
=\int_{\R}H^2 \left(e^{u+\rho A_T +\rho a_1 T}\right)\frac{\exp\left(-\frac{u^2}{2T}\right)}{\sqrt{2 \pi T}} du\\
&-2\int_{\R}\int_{\R}H\left(e^{ \rho x+y\sqrt{1-\rho^2}+\rho A_T + \rho a_1 T}\right)f(x+a_1 T) \frac{\exp\left(-\frac{x^2+y^2}{2T}\right)}{2 \pi T} dy dx\\
&+\int_{\R}f^2(x+a_1 T) \frac{\exp\left(-\frac{x^2}{2T}\right)}{\sqrt{2 \pi T}}dx\\
&=\int_{\R}\left(H^2 \left(e^{u+\rho A_T +\rho a_1 T}\right)-f^2(x+a_1 T)\right)\frac{\exp\left(-\frac{x^2}{2T}\right)}{\sqrt{2 \pi T}}dx.
\end{align*}
It follows from integral representation of $(\wt{G}+v(g)D_T)^+$ that
 \begin{align}
\nonumber&\E\left(\wt{G}-g-\int_0^{T}\wt{\xi}_s d \wt{S}_s\right)^2=\E (\wt{G}-(\wt{G}+v(g)D_T)^+)^2\\
\nonumber&=\E (\wt{G})^2\mathbbm{1}\{\wt{G}+v(g)D_T \leq 0\}+(v(g))^2\E D_T^2\mathbbm{1}\{\wt{G}+v(g)D_T > 0\}\\
\nonumber&=\int_{-\infty}^{h^{(-1)}(-v(g))}f^2(x+a_1 T)\frac{\exp\left(-\frac{x^2}{2T}\right)}{\sqrt{2 \pi T}}dx\\
\nonumber&+(v(g))^2\int_{h^{(-1)}(-v(g))}^{+\infty} \exp (-2a_1 x-a_1^2 T)\frac{\exp\left(-\frac{x^2}{2T}\right)}{\sqrt{2 \pi T}}dx\\
\nonumber &=\int_{-\infty}^{h^{(-1)}(-v(g))}f^2(x+a_1 T)\frac{\exp\left(-\frac{x^2}{2T}\right)}{\sqrt{2 \pi T}}dx\\
\label{delta_G}&+(v(g))^2\exp (a_1^2 T)\Phi\left(-\frac{h^{(-1)}(-v(g))}{\sqrt{T}}-2a_1\right).
 \end{align}}
\end{example}

\begin{example}{\rm
Consider the same problem as in Example \ref{ex_S_H} with  specific function $H(y)=(y-K)^+, y\in\R.$
At first, the function $f$ from (\ref{df}) has the following form
\begin{align}
\label{df2}
\nonumber
f(x)&=\int_{\R}\left(\exp(\rho x + \sqrt{T(1-\rho^2)}y+\rho A_T)-K\right)^+\frac{\exp(-y^2/2)}{\sqrt{2\pi}}dy\\
\nonumber &=\exp\left(\rho x+ T(1-\rho^2)/2+\rho A_T\right)\Phi\left(\frac{\rho x -\ln K+\rho A_T}{\sqrt{T(1-\rho^2)}}+ \sqrt{T(1-\rho^2)}\right)\\
&-K\Phi\left(\frac{\rho x-\ln K+\rho A_T}{\sqrt{T(1-\rho^2)}}\right).
\end{align}
Denote $\alpha=-\ln K+\rho A_T + T(1-\rho^2)/2$ and $\beta=\sqrt{T(1-\rho^2)/2}.$ Then
\begin{align}
\label{df3}
f(x)&=K\exp\left(\rho x+ \alpha\right)\Phi\left(\frac{\rho x+\alpha}{\sqrt{2}\beta}+ \frac{\beta}{\sqrt{2}}\right)-K\Phi\left(\frac{\rho x+\alpha}{\sqrt{2}\beta}- \frac{\beta}{\sqrt{2}}\right).
\end{align}

We evaluate $\frac{\partial f}{\partial x}.$ Using (\ref{df3}) we get
\begin{align}
\nonumber \frac{\partial f(x)}{\partial x}&=\rho K\exp\left(\rho x+ \alpha\right)\Phi\left(\frac{\rho x+\alpha}{\sqrt{2}\beta}+ \frac{\beta}{\sqrt{2}}\right)\\
\label{deriv_f2}&+\frac{\rho K}{\sqrt{2} \beta} \exp\left(\rho x+ \alpha \right)
\varphi\left(\frac{\rho x+\alpha}{\sqrt{2}\beta}+ \frac{\beta}{\sqrt{2}}\right)- \frac{\rho K}{\sqrt{2}\beta} 
\varphi\left(\frac{\rho x+\alpha}{\sqrt{2}\beta}- \frac{\beta}{\sqrt{2}}\right).
\end{align}

Then
\begin{align}
\nonumber
&\E_{\wt{P}}\left(\left.\mathbbm{1}\{\widetilde{B}_T\geq h^{-1}(-v(g))\} \frac{\partial f (\widetilde{B}_T)}{\partial x}\right|\widetilde{\Fa}_t\right)\\
\nonumber =&\E_{\wt{P}}\left(\left.\mathbbm{1}\{\widetilde{B}_T-\widetilde{B}_t\geq h^{-1}(-v(g))-\widetilde{B}_t\} \frac{\partial f (\widetilde{B}_T-\widetilde{B}_t+\widetilde{B}_t)}{\partial x}\right|\widetilde{\Fa}_t\right)\\
\nonumber =&\frac{\rho K}{\sqrt{2}\beta} e^{\rho \wt{B}_t+ \alpha} \int_{h^{-1}(-v(g))-\widetilde{B}_t}^{+\infty}e^{\rho x}\Phi\left(\frac{\rho x+\rho \widetilde{B}_t+ \alpha}{\sqrt{2}\beta}+ \frac{\beta}{\sqrt{2}}\right)\frac{\exp\left(-\frac{x^2}{2(T-t)}\right)}{\sqrt{2 \pi (T-t)}}dx\\
\nonumber +&\frac{\rho K e^{\rho \wt{B}_t+ \alpha}}{2 \pi \sqrt{(T-t)}}  \int_{h^{-1}(-v(g))-\widetilde{B}_t}^{+\infty}\exp\left(\rho x -\frac{1}{2}\left(\frac{\rho x+\rho \widetilde{B}_t +\alpha}{\sqrt{2}\beta}+ \frac{\beta}{\sqrt{2}}\right)^2- \frac{x^2}{2(T-t)}\right)dx\\
-&\frac{\rho K }{2 \pi \sqrt{(T-t)}}  \int_{h^{-1}(-v(g))-\widetilde{B}_t}^{+\infty}\exp\left(-\frac{1}{2}\left(\frac{\rho x+\rho \widetilde{B}_t +\alpha}{\sqrt{2}\beta}- \frac{\beta}{\sqrt{2}}\right)^2- \frac{x^2}{2(T-t)}\right)dx.
\label{l2}
\end{align}
Combining (\ref{xi_S}) and (\ref{l2}) we obtain the solution $\wt{\xi}_t$ of minimization problem $[(S({g,0;{G}})]:$
\begin{align}
\nonumber
\wt{\xi}_t&=
\frac{\rho K \wt{S}_t e^{\rho \wt{B}_t  + \alpha}} {\sqrt{2}\beta}  \int_{h^{-1}(-v(g))-\widetilde{B}_t}^{+\infty}e^{\rho x}\Phi\left(\frac{\rho x+\rho \widetilde{B}_t+ \alpha}{\sqrt{2}\beta}+ \frac{\beta}{\sqrt{2}}\right)\frac{\exp\left(-\frac{x^2}{2(T-t)}\right)}{\sqrt{2 \pi (T-t)}}dx\\
\nonumber +&\frac{\rho K \wt{S}_t e^{\rho \wt{B}_t  + \alpha}}{2 \pi \sqrt{(T-t)}}  \int_{h^{-1}(-v(g))-\widetilde{B}_t}^{+\infty}\exp\left(\rho x -\left(\frac{\rho (x+\widetilde{B}_t) +\alpha}{2\beta}+ \frac{\beta}{2}\right)^2- \frac{x^2}{2(T-t)}\right)dx\\
\nonumber-&\frac{\rho K \wt{S}_t}{2 \pi \sqrt{(T-t)}}  \int_{h^{-1}(-v(g))-\widetilde{B}_t}^{+\infty}\exp\left(-\left(\frac{\rho x+\rho \widetilde{B}_t +\alpha}{2\beta}- \frac{\beta}{2}\right)^2- \frac{x^2}{2(T-t)}\right)dx\\
\label{xi_S_2} &- v(g)a_1 \wt{S}_t e^{-a_1\wt{B}_t-\frac{a_1^2 t}{2} + a_1^2 T }\Phi\left(\frac{\wt{B}_t-h^{(-1)}(-v(g))}{\sqrt{T-t}}-a_1\right).
\end{align}
}

\end{example}

\subsection{Illustrative example}

Consider a numerical example of minimization problem  $[S({g,0;{G}})]$ in the case of $T=2$, $K=1,$ $a=0.5$ and $a(s)\equiv 1,s\geq 0.$ Let $g\in\{g_1,g_2,g_3,g_4\}=\{0.5,1,2,3\},$  and $\rho \in \{0.3,0.5,0.75\}.$

We simulate the trajectories of the Wiener process $\wt{W}_t$ on time interval $[0,2],$ and so we obtain the trajectories of $\wt{B}_t=\wt{W}_t+(a+1/2)t$ and $\wt{S}_t=\exp(\wt{B}_t-t/2).$ We take one of the obtained sample paths of $\wt{S}_t$ and presented it on Figure \ref{xi_graphics_B}.  By formula (\ref{xi_S_2}), for this trajectory we construct the sample paths of $\wt{\xi}_t,$ with $g=3$ and $\rho=0.3,0.5,0.75.$ On Figure \ref{xi_graphics_S}, we present these sample paths, grouped by $\rho.$

Values of solutions of equation (\ref{MMM-1-2}), values of $\E(G-\wt{G})^2$ and $\E_{\wt{P}}\wt{G}$ are presented on Table \ref{KG_S}. We see that $\wt{P}$-average of $G$ decreases when $\rho$ increases.

The values of \eqref{delta_G} are presented on Table \ref{Tab2} and they are naturally decreasing to 0 as $g\uparrow\E_{\wt{P}}\wt{G}.$ However, the total risk is always greater the than $\E(G-\wt{G})^2$. On the second part of Table \ref{Tab2} we present sensitivity of \eqref{delta_G} with respect to $g$ by computing the following values
$\frac{1}{min_i}\frac{ min_{i+1} -min_{i}}{ g_{i+1}-g_{i}},$ where $min_i=\E (\wt{G}-(\wt{G}+v(g_i)D_T)^+)^2,$ $i=1,2,3.$

\begin{table}[ht]
\footnotesize
\caption{}
\label{KG_S}
\begin{center}
\begin{tabular}{|r|r r r r|}
\multicolumn{5}{c}{Solutions of equation (\ref{MMM-1-2})}\\
 \hline
  & g=0.5 &g=1 & g=2 & g=3 \\ \hline
 $\rho$=0.3 &  83.7419 & 41.1694 & 17.2824 & 9.18066 \\
 $\rho$=0.5 &  99.4493 & 40.1427 & 12.4501 & 4.90082 \\
 $\rho$=0.75&  110.058 & 31.6334 & 5.00461 & 0.60940 \\
\hline
\end{tabular}
\begin{tabular}{|r|r r|}
\multicolumn{3}{c}{Values of problem [$S({g,0;{G}})$]}\\
 \hline
  &$\E(G-\wt{G})^2$  &  $\E_{\wt{P}}\wt{G}$  \\ \hline
 $\rho$=0.3 & 2497.04 & 10.0949\\
 $\rho$=0.5 & 2315.28 & 6.50358\\
 $\rho$=0.75& 1738.17 & 3.67076\\
\hline
\end{tabular}
\end{center}
\end{table}
\begin{table}[ht]
\caption{Minimum values of problem  [$S({g,0;{G}}$]}
\label{Tab2}
\footnotesize
\begin{center}
\begin{tabular}{|r|r r r r|}
 \multicolumn{5}{c}{Values of \eqref{delta_G}}\\
  \hline
  & g=0.5 &g=1 & g=2 & g=3 \\ \hline
 $\rho$=0.3 & 361.328 & 306.613 & 225.509 & 165.277 \\
 $\rho$=0.5 & 412.641 & 308.070 & 177.939 &  98.553 \\
 $\rho$=0.75& 506.590 & 291.153 &  92.440 &  15.821 \\
\hline 
\end{tabular}~
\begin{tabular}{|r|r r  r|}
 \multicolumn{4}{c}{Changes of \eqref{delta_G}}\\
  \hline
  & g=0.5 &g=1 & g=2  \\ \hline
 $\rho$=0.3 &  -30.3\% & -26.5\%  & -26,7\% \\
 $\rho$=0.5 & -50.7\% & -42.2\%   &  -44.6\% \\
 $\rho$=0.75&  -85.1\% &  -68.3\% &  -82.9\% \\
\hline 
\end{tabular}
\end{center}
\end{table}

\begin{figure}[h]
 \center{ \includegraphics[width=.8\linewidth]{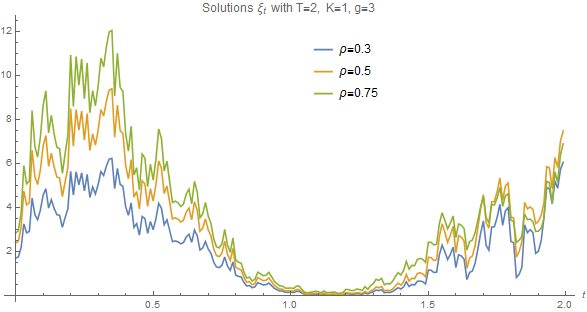}}
  \caption{Sample paths of solutions of minimization problem [$S({g,0;{G}})$]
  \label{xi_graphics_S} with $g=4$}
\end{figure}
~\begin{figure}[h]
 \center{ \includegraphics[width=.8\linewidth]{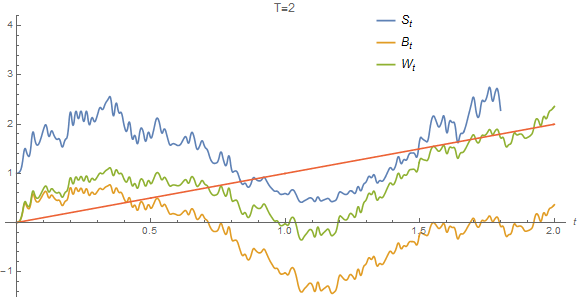}}
  \caption{Sample paths of $\wt{S},\wt{B},\wt{W}$  
  \label{xi_graphics_B}}
\end{figure}


\end{document}